\newtheorem{thm}{Theorem}
\newtheorem{defn}{Definition}
\newcommand{\supp}{\mbox{supp}}
\newcommand{\dif}{\mathrm{d}}
\newcommand{\tildeprob}{\stackrel{\textbf{\scriptsize Prob}}{\sim}}
\newcommand{\onprob}{\stackrel{\textbf{\scriptsize Prob}}{\approx}}
\newcommand{\tildeposs}{\stackrel{\textbf{\scriptsize Poss}}{\sim}}
\newcommand{\onposs}{\stackrel{\textbf{\scriptsize Poss}}{\approx}}
\newcommand{\ket}[1]{{\left\vert{#1}\right\rangle}}
\newcommand{\bra}[1]{{\left\langle{#1}\right\vert}}
\newcommand{\ketbra}[1]{\ket{#1}\!\bra{#1}}
\newcommand{\braket}[2]{\left<#1\middle|#2\right>}
\newcommand{\sprod}[2]{\left|\left<#1\middle|#2\right>\right|^2}
\definecolor{xdxdff}{rgb}{0,0,0.}
\definecolor{uuuuuu}{rgb}{0.26666666666666666,0.26666666666666666,0.26666666666666666}
\DeclareRobustCommand{\qed}{%
  \ifmmode % if math mode, assume display: omit penalty etc.
  \else \leavevmode\unskip\penalty9999 \hbox{}\nobreak\hfill
  \fi
  \quad\hbox{\qedsymbol}}
\newcommand{\openbox}{\leavevmode
  \hbox to.77778em{%
  \hfil\vrule
  \vbox to.675em{\hrule width.6em\vfil\hrule}%
  \vrule\hfil}}
\newcommand{\qedsymbol}{\openbox}
\newenvironment{proof}[1][\proofname]{\par
  \normalfont
  \topsep6\p@\@plus6\p@ \trivlist
  \item[\hskip\labelsep\itshape
    #1.]\ignorespaces
}{%
  \qed\endtrivlist
}
\newcommand{\proofname}{Proof}
\begin{document}
\title{Contextuality under weak assumptions}
\author{Andrew W. Simmons}
\address{Department of Physics, Imperial College London, London SW7 2AZ, United Kingdom}
\author{Joel J. Wallman}
\address{Institute for Quantum Computing and Department of Applied 
	Mathematics, University of Waterloo, Waterloo, Canada}
\author{Hakop Pashayan}
\address{Centre for Engineered Quantum Systems, School of Physics, The 
University of Sydney, Sydney, NSW 2006, Australia}
\author{Stephen D. Bartlett}
\address{Centre for Engineered Quantum Systems, School of Physics, The 
University of Sydney, Sydney, NSW 2006, Australia}
\author{Terry Rudolph}
\address{Department of Physics, Imperial College London, London SW7 2AZ, United Kingdom}

\begin{abstract}
	The presence of contextuality in quantum theory was first highlighted by Bell, Kochen and Specker, who discovered that for quantum systems of three or more dimensions, measurements could not be viewed as deterministically revealing pre-existing properties of the system. More precisely, no model can assign deterministic outcomes to the projectors of a quantum measurement in a way that depends only on the projector and not the context (the full set of projectors) in which it appeared, despite the fact that the Born rule probabilities associated with projectors are independent of the context. A more general, 
	operational 
	definition of contextuality introduced by Spekkens, which we will term ``probabilistic contextuality'', drops the assumption of determinism and allows for operations other than measurements to be considered contextual. Even two-dimensional quantum mechanics can be shown to be contextual under this generalised notion.  Probabilistic noncontextuality represents the postulate that elements of an operational theory that cannot be distinguished from each other based on the statistics of arbitrarily many repeated experiments (they give rise to the same operational probabilities) are ontologically identical. In this paper, we introduce a framework that enables us to distinguish between different noncontextuality assumptions in terms of the relationships between the ontological representations of objects in the theory given a certain relation between their operational representations.  
This framework can be used to motivate and define a ``possibilistic'' analogue, encapsulating the idea that elements of an operational theory that cannot be unambiguously distinguished operationally can also not be unambiguously distinguished ontologically. We then prove that possibilistic noncontextuality is equivalent to an alternative notion of noncontextuality proposed by Hardy. Finally, we demonstrate that these weaker noncontextuality assumptions are sufficient to prove alternative versions of known ``no-go'' theorems that constrain $\psi$-epistemic models for quantum mechanics.
\end{abstract}

\date{\today}

\maketitle

\section{Introduction}

One way in which quantum mechanics differs strongly from classical mechanics is the existence of incompatible observables;
if we want to think that measurements reveal properties of a system, then we must reconcile this with the fact that there exist pairs of properties
that cannot be simultaneously measured. By considering the same observables appearing in different \emph{contexts}, that is, measured alongside different sets of other observables, Bell\,\cite{Bell2004}, as well as Kochen and Specker\,\cite{Koch1967} showed
that measurements could not be thought of as simply revealing underlying properties of the system in a way that was independent of the context in which the observable was measured. This property of quantum mechanics is now referred to as \emph{contextuality}.

Discussions of contextuality often focus on scenarios in which an element of a operational theory such as quantum mechanics manifests itself in two different contexts, such as two different decompositions of a density matrix; or an observable being measured in two different ways, alongside different sets of co-measurable observables. These manifestations are treated identically by the operational theory, always leading to the same probabilities. In fact, this is why the same notation is used for the objects in the first place, as a context-independent symbol is all that is needed to calculate probabilities.  However there is no formal argument to be made that these elements which are operationally context-independent should also be ontologically context-independent: this must be taken axiomatically.  Motivation for such an axiom cannot be logically deduced or extrapolated from experimental data, but rather must be based on aesthetic principles (e.g., Leibniz's principle\,\cite{Spek2016}). We call such axioms ``noncontextuality assumptions''. This idea, that operationally indistinguishable objects should also be considered ontologically identical is a form of noncontextuality, henceforth referred to as ``probabilistic contextuality''. We see that it somewhat encapsulates the hypothesis, reminiscent of Occam's razor, that these objects give rise to identical properties because they are ontologically identical. We will introduce a framework which allows us to compare and contrast different noncontextuality axioms by viewing them as assumptions about ontological properties motivated by operational data.

This framework can be used to motivate and define a ``possibilistic'' analogue to probabilistic noncontextuality. Possibilistic noncontextuality reflects the assumption that two elements of a physical theory share the same operational possibilities also share the same ontological possibilities. The notion of grouping states by the sets of events that they assign nonzero probability is not entirely novel: this viewpoint emerges naturally in the setting of classical and quantum probability theory from considerations of what it means for different parties to hold different but compatible beliefs about a system such as those of Brun, Finkelstein and Mermin\,\cite{Brun2002} or Caves, Fuchs and Schack\,\cite{Cave2002}.  This notion of possibilistic noncontextuality is strictly weaker than that of probabilistic contextuality. We will demonstrate that analogues of some results originally proved using the stronger notion of probabilistic noncontextuality in fact hold with this weaker assumption: for example, it was shown by Spekkens\,\cite{Spek2005} that the assumption of noncontextuality for preparations is incompatible with the operational predictions of quantum mechanics, and it was shown by Morris\,\cite{Morr2009} and by Chen and Montina\,\cite{Chen2011} that any ontological model obeying noncontextuality for measurements must be $\psi$-ontic. We demonstrate that similar results can be proven using only the weaker, possibilistic, notion of noncontextuality.

\section{Notions of Noncontextuality}

Noncontextuality is a property of an ontological model (also known as a hidden 
variable model).  In order to formulate a generalized 
notion of contextuality, we begin by briefly reviewing the framework of 
ontological models, which allow for realistic descriptions of 
experimental procedures within an operational theory~\,\cite{Spek2005,HR2007}.

\subsection{Ontological models and operational theories}

An operational theory assigns probabilities ${\rm 
Pr}(k|\mathcal{P},\mathcal{T},\mathcal{M})$ to outcomes $k$ occurring when 
procedures for a preparation $\mathcal{P}$, a transformation $\mathcal{T}$, and 
finally a measurement $\mathcal{M}$ are implemented. Any particular $\mathcal{P}$, $\mathcal{M}$ or $\mathcal{T}$ will be denoted an \emph{element} of the operational theory. Quantum mechanics can be 
regarded as an operational theory by identifying preparation procedures with 
density matrices $\rho$, transformations with unitary operators 
$U$ acting on density matrices via conjugation, and $K$-outcome measurements with positive-operator-valued measures (POVM) 
$\mathcal{M} = \{E_1,\ldots E_K\}$. The probability for an outcome $k$ given a preparation described by $\rho$ and a transformation by $U$ is then 
given by the Born rule,
\begin{equation}
{\rm Pr}(k|\rho,U,\mathcal{M}) = \tr [E_k U\rho U^\dagger].
\end{equation}

An ontological model for such an operational theory consists of a set $\Lambda$, equipped with a $\sigma$-algebra $\Sigma$, of 
possible ontic (or ``real'') states alongside ontological representations of preparations, transformations and measurements, dependent on each other only via states $\lambda\in\Lambda$. A 
preparation procedure $\mathcal{P}$ is represented within the ontological model by the preparation of a system in an 
ontic state $\lambda$ sampled according to some measure $\mu_{\mathcal{P}}$ on 
$\Lambda$.  A transformation $\mathcal{T}$ in the operational theory is represented in the ontological model by a map that, for a system in the ontic state
$\lambda$, samples a new $\lambda'\in\Lambda$ according to 
the conditional probability measure $\Gamma_{\mathcal{T}}(\lambda'|\lambda)$ on 
$\Lambda$. Finally, a $K$-outcome measurement $\mathcal{M}$ in the operational theory is represented in the ontological model by a response function $\xi$ such that, for a system in the ontic state $\lambda$, the measurement outcome $k\in\{1,\ldots,K\}=:\mathbb{N}_K$ is sampled from 
the conditional probability distribution $\{\xi_{\mathcal{M}}(k|\lambda)\}$ on 
$\mathbb{N}_K$. 

The ontological model reproduces the predictions of the 
operational theory if
\begin{equation}
{\rm Pr}(k|\mathcal{P},\mathcal{T},\mathcal{M}) = \int \mathrm{d}\lambda\,
\mathrm{d}\lambda'\, \mu_{\mathcal{P}}(\lambda)
\Gamma_{\mathcal{T}}(\lambda'|\lambda) \xi_{\mathcal{M}}(k|\lambda')
\end{equation}
for all $k,\mathcal{P},\mathcal{T},\mathcal{M}$. While all constructions in this 
paper hold for general operational theories, we will only explicitly consider 
ontological models of quantum mechanics. For brevity, we will omit 
transformations and focus solely on preparations and measurements, though all definitions can be directly extended to account for transformations.

The Beltrametti-Bugajski model\,\cite{Belt1995} is perhaps the simplest example 
of an ontological model of pure-state quantum mechanics. In it, pure quantum states are treated as being physical states, so we have $\Lambda=\mathcal{H}$; 
$\mu_\ket{\psi}(\ket{\lambda})=\delta_\ket{\psi}(\ket{\lambda})$; and 
$\xi_{M}(E|\ket{\lambda})=\tr(E\ket{\lambda}\!\bra{\lambda})$. 

We note that for the rest of this paper we will be assuming that our ontological models have a convex structure and therefore preparations (or other operational procedures) can be statistically mixed with each other. Without this assumption many of these noncontextuality relations become trivial. An example of a model without such an assumption is quantum mechanics restricted to pure states, unitary transformations, and projective measurements.

\subsection{Operational and ontological relations}

In this section, we define a generalised \emph{noncontextuality assumption} to be an assumption that 
a particular relation at the level of the operational theory implies a (usually similar) relation at the 
level of the ontological theory. It is a scheme for making conclusions about ontology based on 
a theory's operational predictions. A noncontextuality assumption allows 
reasoning about ontological properties based on observable (operational) 
phenomena.

More formally, an \emph{operational relation} $\sim$ will be a symmetric, 
reflexive relation over elements in the operational theory (i.e., preparations, 
transformations and/or measurements); an \emph{ontological relation} $\approx$ 
is a symmetric, reflexive relation over elements of the ontological model. For 
clarity, $\sim$ and $\approx$ will always be used to denote operational and 
ontological relations respectively.

\begin{defn}[Noncontextuality assumption]
A noncontextuality assumption is a statement that when two objects in the 
operational theory are related by a specified \emph{operational relation} 
$\sim$, their ontological representations must be related by an 
\emph{ontological relation} $\approx$. For example, applied to preparations these are statements
of the form
\begin{equation}
\mathcal{P}_1\sim\mathcal{P}_2 \Rightarrow \mu_{\mathcal{P}_1}\approx \mu_{\mathcal{P}_2}
\end{equation}
\end{defn}

Notably, we do not require that an operational or ontological relation be transitive by definition,
and therefore do not restrict ourselves to only consider equivalence relations for this purpose. In practice, however, 
many interesting noncontextuality assumptions are defined using equivalence relations and elements related by such will be referred to as \emph{equivalent}.

In general, the operational relation is a condition that specifies which 
elements of the operational theory are related to each other. The 
noncontextuality assumption and ontological relation then state how the ontic 
representation of two equivalent elements of the operational theory are related 
to each other. Given a noncontextuality assumption \textbf{X}, we say that an
operational theory exhibits \textbf{X} contextuality if in any ontological model 
that correctly reproduces the operational theory there exists some pair of  operational elements related by $\stackrel{\tiny{\textbf{X}}}{\sim}$ whose ontological representations are not related by $\stackrel{\tiny{\textbf{X}}}{\approx}$.

While noncontextuality assumptions can apply to preparations, transformations and 
measurements, in this section we will use preparations as an example.

\subsection{Probabilistic Noncontextuality}

As discussed, in quantum mechanics there are many elements such as 
density matrices, observables, and POVM elements,
whose operational behaviour is context-independent. The notion of contextuality due to Spekkens\,\cite{Spek2005} captures this tension
by making the assumption that objects that are operationally identical are also ontologically identical. We will present this form of
noncontextuality assumption using our notation here.

\begin{defn}[\textbf{ProbOp}] Two preparation procedures $\mathcal{P}_1$ and 
	$\mathcal{P}_2$ in an operational model are \textit{probabilistically 
	equivalent} (denoted 
	$\mathcal{P}_1\tildeprob \mathcal{P}_2$) if for all outcomes $k$ of all 
	measurement procedures 	$\mathcal{M}$, 
	\begin{equation}
	{\rm Pr}(k|\mathcal{P}_1,\mathcal{M})= 
	{\rm Pr}(k|\mathcal{P}_2,\mathcal{M}).
	\end{equation}
\end{defn}

In the case of quantum mechanics, all operational predictions for a preparation 
procedure $\mathcal{P}$ are completely encoded in the associated density matrix 
$\rho_{\mathcal{P}}$. Therefore we have
\begin{equation}
\mathcal{P}_1 \tildeprob \mathcal{P}_2 \Leftrightarrow \rho_{\mathcal{P}_1} = 
\rho_{\mathcal{P}_1}.
\end{equation}
As a particular example, the maximally mixed state can be prepared in many 
ways, such as by tracing over half of a maximally-entangled pair, randomly 
applying a unitary operation or by preparing a system in a basis and forgetting 
which element was prepared. All of these methods are probabilistically 
equivalent for a single qubit.

Having specified an operational relation, the next step is to specify an 
ontological relation.

\begin{defn}[\textbf{ProbOn}] Two preparation distributions
$\mu_{\mathcal{P}_1}$, $\mu_{\mathcal{P}_2}$ in an ontological model are 
\emph{probabilistically equivalent}, denoted 
$\mu_{\mathcal{P}_1}\onprob\mu_{\mathcal{P}_2}$, if and only if 
$\mu_{\mathcal{P}_1}(\lambda)=\mu_{\mathcal{P}_2}(\lambda)$ $\forall\lambda$.
\end{defn}

\begin{defn}[\textbf{Probabilistic Noncontextuality}] For any two preparations 
$\mathcal{P}_1, \mathcal{P}_2$ related by \textbf{ProbOp}, their ontological 
representations $\mu_{\mathcal{P}_1},\mu_{\mathcal{P}_2}$ are related by 
\textbf{ProbOn}. That is, the probabilistic noncontextuality assumption states that
\begin{equation}
\mathcal{P}_1\tildeprob \mathcal{P}_2 \Rightarrow \mu_{\mathcal{P}_1} \onprob \mu_{\mathcal{P}_2}.
\end{equation}
\end{defn}

In the quantum case, this translates to assuming
\begin{equation}
\mu_{\mathcal{P}} = \mu_{\rho_{\mathcal{P}}}.
\end{equation}
So, in a probabilistically noncontextual model, if there are two preparation procedures that cannot be statistically distinguished by any measurements then those preparations lead to identical physical situations. In a probabilistically contextual model, then there are different physical situations that nonetheless lead to identical operational predictions for any measurement: the inability to distinguish them is caused by lack of sufficiently fine-grained measurement. %REVIEW

We can look to electromagnetism in order to find a classical example of an application of this principle. Na\"{i}vely, one might expect an operational preparation of an electromagnetic system to correspond to a specific magnetic vector potential as an ontic state. However, different vector potentials correspond to the same experimental predictions if they differ by a curl-free vector field. The application of the probabilistic noncontextuality assumption, then, would be to say that all such preparations are really ontologically identical; they result in the same distribution over ontic states. The vector potential is often said to be unphysical because of the property that different potentials lead to identical experimental predictions; this line of argument would lead one to conclude that not only are such distributions over ontic states identical, but that they are a delta function. That is, we quotient the state space to form a new ontic state $\tilde{\Lambda}=\nicefrac{\Lambda}{\sim}$, where $A\sim A'$ if and only if $\nabla(A-A')=0$. This will ensure that any such pair of ontic states have the same representative in $\tilde{\Lambda}$. The assumptions leading to such an ontological refinement are thus strictly stronger than those leading only to probabilistic noncontextuality.

We see that both the operational relation \textbf{ProbOp} and the ontological 
relation \textbf{ProbOn} are \emph{probabilistic} in that they depend on the 
full set of (exact) probabilities associated with the object. These 
probabilistic relations are very restrictive and as a result, $\tildeprob$ is a very selective relation, leading to a restriction on ontological models that is more easily satisfied than that of a less selective relation.  However, those elements that are related 
have a very strong ontological condition applied by \textbf{ProbOn}. %REVIEW

Contextuality is often considered to be a generalisation of nonlocality; by Fine's theorem \cite{Fine1982}, the factorisability condition that characterises local distributions in a Bell scenario is interchangeable with the assumption of outcome-determinism for ontological models. With this assumption, nonlocality manifests itself as noncontextuality with respect to contexts chosen jointly by a spacelike separated Alice and Bob.  Some notions of contextuality, such as that of Kochen and Specker\cite{Koch1967}; Klyachko, Can, Cetiner, Bincioglu and Shumovsky \cite{Klya2008}; Abramsky and Brandenburger \cite{Abra2011}; and Ac\'{i}n, Fritz, Leverrier and Bel\'{e}n Sainz \cite{Acin2015} assume these deterministic models for their notions of contextuality. Probabilistic noncontextuality is a step more general, not necessarily assuming determinism, but recovering these notions of noncontextuality when we restrict our gaze to outcome-deterministic ontological models. It can be considered a generalised form of the type of nonlocality identified by Bell; a scenario demonstrates probabilistic contextuality if its precise operational probabilities cannot be fully explained by a noncontextual (rather than nonlocal) model. As such, under certain assumptions, probabilistic noncontextuality can be detected by inequalities that are robust to noise, as an analogue of Bell inequalities \cite{Klya2008, Mazu2016}.

\subsection{Possibilistic Noncontextuality}

Another natural choice for both the operational and ontological relations is a \emph{possibilistic} one, in which we consider only the 
possibilities of operational and ontological events, rather than their 
probabilities. Such possibilistic considerations also appear naturally in the setting of consistency conditions for agents beliefs about the state of a system, which have been previously studied in the literature.  In the work of Brun, Finkelstein and Mermin (BFM)\,\cite{Brun2002}, a consistent set of state assignments is a set of density matrices which have some overlap in their operational possibilities. The different parties can assign different density matrices to a system, but as long as there exist some states in the mutual support of all of them, then there is an event that can occur that all of them agree is possible, meaning that their initial state assignments are consistent. In that of Caves, Fuchs and Schack (CFS)\,\cite{Cave2002}, the strongest compatibility criterion that can be applied is to ask whether or not the density matrices assigned to the state by each of the parties are in the same equivalence class, defined in this possibilistic sense. If they are, then all of them agree on which events are possible and impossible, even if they disagree on the precise probabilities to assign to each event. Both of these conditions are naturally possibilistic rather than probabilistic.

Such considerations motivate us to define possibilistic operational and ontological relations.

\begin{defn}[\textbf{PossOp}] Two preparation procedures $\mathcal{P}_1$ and 
	$\mathcal{P}_2$ in an operational theory are \textit{possibilistically equivalent}, denoted 
	$\mathcal{P}_1 \tildeposs \mathcal{P}_2$, if for all outcomes $k$ of all 
	measurement procedures 	$\mathcal{M}$, 
	\begin{equation}
	{\rm Pr}(k|\mathcal{P}_1,\mathcal{M}) = 0 \Leftrightarrow {\rm 
		Pr}(k|\mathcal{P}_2,\mathcal{M}) = 0.
	\end{equation}
\end{defn}

In the quantum case, two preparation procedures $\mathcal{P}_1$ and $\mathcal{P}_2$ are possibilistically equivalent if and only if the density matrices they give rise to have the same kernel (or equivalently, the same support). The kernel, which is spanned by the states with which $\rho$ has no overlap, completely defines the possibilistic structure of measurements.
\begin{equation}
\mathcal{P}_1 \tildeposs \mathcal{P}_2 \Leftrightarrow \ker\rho_{\mathcal{P}_1} 
= \ker\rho_{\mathcal{P}_2}
\end{equation}
where $\ker M = \{v:Mv = 0\}$ is the kernel of $M$. 

\begin{defn}[Support]
The \textit{support} of a measure $\mu$ is the largest set $S(\mu)$ such that every open set which has non-empty intersection with $S(\mu)$ has nonzero measure.
\end{defn}

More succinctly, but a little less accurately, the support $S(\mu)=\{\lambda:\mu(\lambda)>0\}$.

\begin{defn}[\textbf{PossOn}] Two preparation distributions $\mu_{\mathcal{P}_1}$, $\mu_{\mathcal{P}_2}$ in an ontological model are \emph{possibilistically equivalent}, denoted $\mu_{\mathcal{P}_1}\onposs\mu_{\mathcal{P}_2}$, if $\mathcal{S}(\mu_{\mathcal{P}_1}) = 
\mathcal{S}(\mu_{\mathcal{P}_2})$.
\end{defn}

\begin{defn}[\textbf{Possibilistic Noncontextuality}] For any two preparations 
$\mathcal{P}_1, \mathcal{P}_2$ related by \textbf{PossOp}, their ontological 
representations $\mu_{\mathcal{P}_1},\mu_{\mathcal{P}_2}$ are related by 
\textbf{PossOn}. That is,
\begin{equation}
\mathcal{P}_2\tildeposs \mathcal{P}_2 \Rightarrow \mu_{\mathcal{P}_1} \onposs \mu_{\mathcal{P}_2}.
\end{equation}
\end{defn}

In the quantum case, this translates to assuming
\begin{equation}
\ker \rho_{\mathcal{P}_1} = \ker \rho_{\mathcal{P}_2} \Rightarrow 
\mu_{\mathcal{P}_1} \onposs \mu_{\mathcal{P}_2}.
\end{equation}

A fully possibilistic noncontextuality assumption, then, would be that if two preparation procedures are possibilistically equivalent
operationally, then their ontological representations are also possibilistically equivalent.

A possibilistically noncontextual model for quantum mechanics would provide a natural explanation for 
why there exist different preparation procedures that 
cannot be unambiguously discriminated\,\cite{Ivan1987}; such 
preparations lead to ensembles of ontic states that themselves cannot be 
unambiguously distinguished. Therefore, the lack of the ability to perform unambiguous 
discrimination in a theory could be interpreted as property of the distributions over ontic states themselves, 
rather than emerging from a lack of sufficiently fine-grained measurements.

Additionally, we find that the assumption of possibilistic preparation noncontextuality causes the quantum consistency conditions, mentioned above, for density matrix assignment to coincide with classical notions of consistency of beliefs. If we have a collection of agents, each can describe their subjective beliefs about the state of a quantum system via a density matrix $\rho_i$. According to the BFM consistency criterion, these assignments are consistent if the intersection of the supports of the $\rho_i$ is empty; that is, that there is some subspace of the Hilbert space which each party agrees is in the support of their density matrix. Since the assumption of possibilistic noncontextuality for preparations uniquely defines a set of ontic states associated with this agreed support, this quantum compatibility condition reduces exactly to the classical notion of ``strong consistency'', which is that there exists some physical state that each agent agrees could represent the true state of the system. The strongest compatibility condition considered by Caves, Fuchs, and Schack, which they denote \emph{equal supports}, is that the supports of each of the $\rho_i$ are identical; in the same way as above, the assumption of possibilistic preparation noncontextuality means this condition reduces to the classical concept of \emph{concordance}: that the agents agree on which states are possible and impossible, although their actual probability assignments may differ.

\subsection{Hardy Noncontextuality}

We can now bring other previously-studied notions of noncontextuality within this framework. One such notion is known as ``Hardy'' or ``logical'' noncontextuality \cite{Abra2011,Lalp,Lian2011}, and when applied to preparations corresponds to the assumption that two preparations leading to identical operational predictions must be compatible with the same set of ontic states. As with probabilistic noncontextuality, Hardy noncontextuality has been considered previously under the assumption of outcome determinism, such as in the treatment of Abramsky and Brandenburger. Again, we study a more general notion without this restriction. While probabilistic noncontextuality can be considered a generalisation of Bell nonlocality, Hardy noncontextuality can be considered a generalisation of Hardy's proof of nonlocality. That is, a scenario exhibits Hardy noncontextuality if there are possibilities (rather than detailed probabilities) that cannot be explained within a noncontextual ontological model.

We can formulate Hardy noncontextuality using the operational and ontological relations defined above.

\begin{defn}[\textbf{HardyNC}] An ontological model is Hardy noncontextual iff
$\mathcal{P}_1\tildeprob \mathcal{P}_2\Rightarrow\mu_{\mathcal{P}_1}\onposs \mu_{\mathcal{P}_2}$.
\end{defn}

The assumption of Hardy noncontextuality is strictly weaker than probabilistic noncontextuality 
as it uses the same operational  relation but imposes a weaker ontological relation on the identified elements. Furthermore, 
Hardy noncontextuality is directly implied by possibilistic noncontextuality 
because
\begin{equation}\label{eq:ProbOp_implies_PossOp}
\mathcal{P}_1\tildeprob\mathcal{P}_2
\Rightarrow 
\mathcal{P}_1\tildeposs\mathcal{P}_2.
\end{equation}

Intuitively, the assumption of Hardy noncontextuality appears to also be strictly weaker than 
of possibilistic noncontextuality because the same ontological relation is being 
enforced in both cases, while possibilistic noncontextuality has a operational 
relation that is easier to satisfy and so relates more elements of the theory. 
For example, all density matrices of the form $p\ketbra{0}+(1-p)\ketbra{1}$ for 
$p\in (0,1)$ are \textit{possibilistically} equivalent and yet no two such 
density matrices are \textit{probabilistically} equivalent. 

We now prove, 
however, that Hardy noncontextuality and possibilistic noncontextuality are 
equivalent for ontological models of quantum mechanics. Before doing this, we note that any two preparation procedures that result in the same density matrix being prepared must have the same
ontic supports under an assumption of possibilistic, 
probabilistic, or Hardy noncontextuality. Hence, we will use the notation $\mathcal{S}(\mu_{\rho}) \equiv\mathcal{S}(\mu_{\mathcal{P}})$ for any $\mathcal{P}$ that results in a preparation of $\rho$.

\begin{figure}
\centering\large
\includegraphics[trim={5cm, 16.8cm, 5cm, 2cm},clip,width=0.6\textwidth]{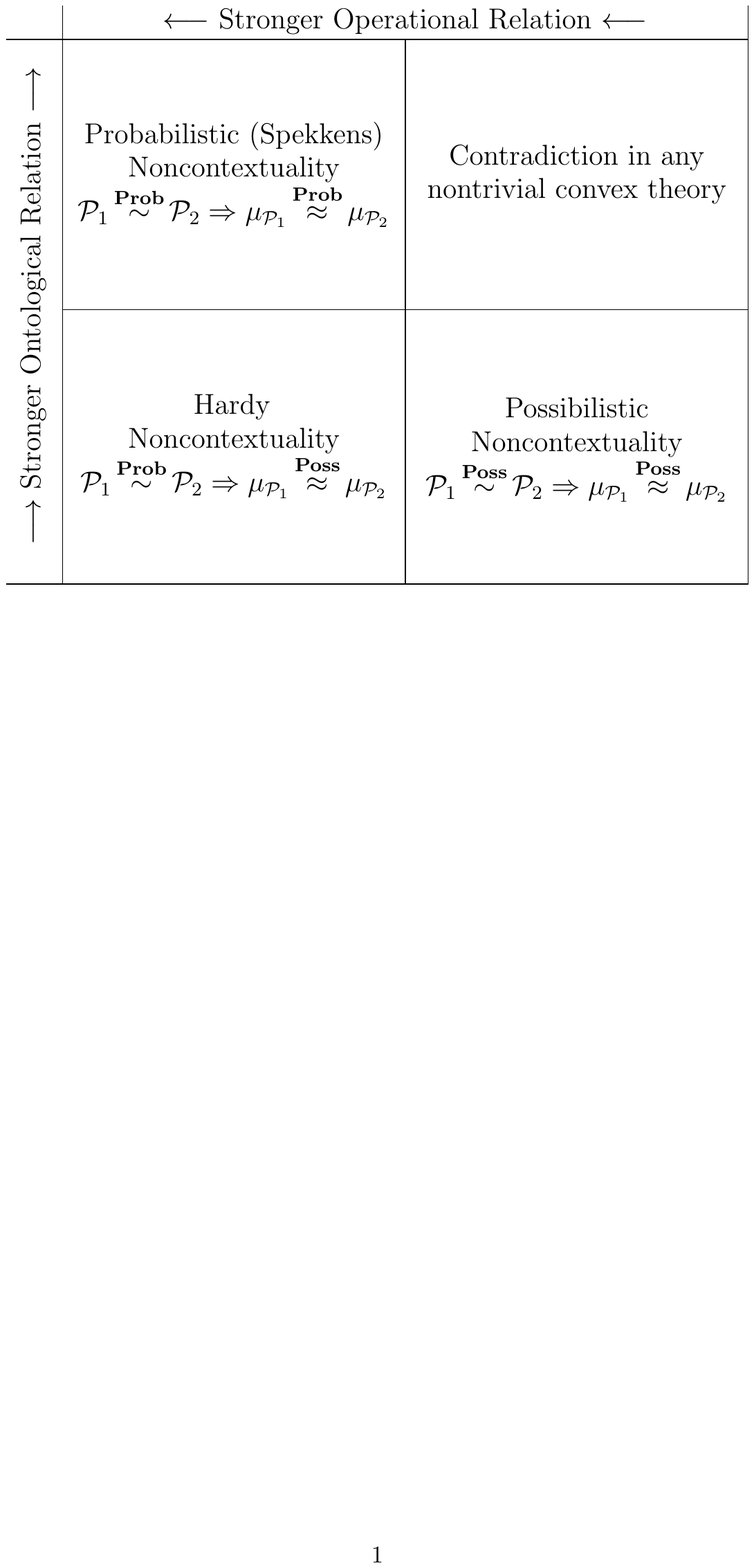}
% \begin{tabular}{c |c | c|}
% &  \multicolumn{2}{c}{$\longleftarrow$ Stronger Operational Relation $\longleftarrow$}\vline\\ \hline
%\parbox[h]{4mm}{\multirow{2}{*}{\rotatebox[origin=l]{90}{\[\longrightarrow\] Stronger Ontological Relation $\longrightarrow$ \!\!\!\!\!\!\!\!\!\!\!\!\!\!\!\!\!\! \newline}}}} & 
%  
%\pbox{4.6cm}{\centering\vspace{1.2cm}Probabilistic (Spekkens) Noncontextuality $\mathcal{P}_1\tildeprob \mathcal{P}_2\Rightarrow\mu_{\mathcal{P}_1}\onprob \mu_{\mathcal{P}_2}$\vspace{1.2cm}}
% &Contradiction \\ \cline{2-3}
% &  \pbox{4.6cm}{\vspace{1cm} \centering \quad\quad Hardy \newline \quad\quad Noncontextuality  $\mathcal{P}_1\tildeprob \mathcal{P}_2\Rightarrow\mu_{\mathcal{P}_1}\onposs \mu_{\mathcal{P}_2}$\vspace{1cm}}
% &\pbox{4.5cm}{\vspace{1.3cm} \centering Possibilistic Noncontextuality  $\mathcal{P}_1\tildeposs \mathcal{P}_2\Rightarrow\mu_{\mathcal{P}_1}\onposs \mu_{\mathcal{P}_2}$\vspace{1.2cm}}\\ \hline
% \end{tabular}

\caption{A square showing the different contextuality assumptions, the strongest of which is contradictory in any convex theory. If convexity is not demanded, there are multiple theories that fulfil the stronger form of noncontextuality indicated as a contradiction above, such as quantum mechanics restricted to pure states and unitary operations.}
\label{fig1}
\end{figure}

\begin{thm}\label{thm:Hardy_poss_eq}
	An ontological model of finite-dimensional quantum mechanics is possibilistically 
	noncontextual if and only if it is Hardy noncontextual.
\end{thm}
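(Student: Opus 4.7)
The plan is to prove the two directions separately. The forward implication that possibilistic noncontextuality implies Hardy noncontextuality is immediate: as noted above, $\mathcal{P}_1 \tildeprob \mathcal{P}_2$ implies $\mathcal{P}_1 \tildeposs \mathcal{P}_2$, so whenever the hypothesis of Hardy noncontextuality holds the hypothesis of possibilistic noncontextuality is also met, and the latter delivers the required conclusion $\mu_{\mathcal{P}_1} \onposs \mu_{\mathcal{P}_2}$.

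For the nontrivial converse, I would assume the model is Hardy noncontextual and take any $\mathcal{P}_1 \tildeposs \mathcal{P}_2$, so that $\rho_1 := \rho_{\mathcal{P}_1}$ and $\rho_2 := \rho_{\mathcal{P}_2}$ share a common support subspace $\mathcal{H}_0 \subseteq \mathcal{H}$. The aim is to show $\mathcal{S}(\mu_{\mathcal{P}_1}) = \mathcal{S}(\mu_{\mathcal{P}_2})$. The key trick is to use finite-dimensionality to write each $\rho_i$ as a convex mixture that \emph{involves} the other: restricted to the finite-dimensional space $\mathcal{H}_0$, both $\rho_1$ and $\rho_2$ are strictly positive operators, so there exists some $\lambda \in (0,1)$ (for example any number smaller than the ratio of the minimum nonzero eigenvalue of one to the maximum eigenvalue of the other) for which both $\rho_1 - \lambda\rho_2$ and $\rho_2 - \lambda\rho_1$ are positive semidefinite. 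After renormalisation these give density matrices $\sigma, \tau$ with
\[
\rho_1 = \lambda\rho_2 + (1-\lambda)\sigma, \qquad \rho_2 = \lambda\rho_1 + (1-\lambda)\tau.
\]
Using the convex structure of the theory I would then build a preparation procedure $\mathcal{P}_1'$ that flips a biased coin and executes $\mathcal{P}_2$ with probability $\lambda$ or a fixed preparation $\mathcal{P}_\sigma$ of $\sigma$ otherwise. Since $\mathcal{P}_1'$ realises $\rho_1$, we have $\mathcal{P}_1 \tildeprob \mathcal{P}_1'$, and its ontic distribution decomposes as $\mu_{\mathcal{P}_1'} = \lambda\mu_{\mathcal{P}_2} + (1-\lambda)\mu_{\mathcal{P}_\sigma}$. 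Hardy noncontextuality then forces
\[
\mathcal{S}(\mu_{\mathcal{P}_1}) = \mathcal{S}(\mu_{\mathcal{P}_1'}) = \mathcal{S}(\mu_{\mathcal{P}_2}) \cup \mathcal{S}(\mu_{\mathcal{P}_\sigma}) \supseteq \mathcal{S}(\mu_{\mathcal{P}_2}),
\]
and the symmetric argument starting from the decomposition of $\rho_2$ yields the reverse inclusion, completing the proof.

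The principal technical point to justify is the identity $\mathcal{S}(\lambda\mu + (1-\lambda)\nu) = \mathcal{S}(\mu) \cup \mathcal{S}(\nu)$ for $\lambda \in (0,1)$, which is immediate from the loose characterisation $\mathcal{S}(\mu) = \{\lambda : \mu(\lambda) > 0\}$ used in the paper but in full topological generality requires that the complement of the support is a null set (which holds under the standard mild conditions on $\Lambda$). It is also worth flagging that the finite-dimensionality hypothesis is genuinely used: in the infinite-dimensional case the spectra of $\rho_1$ and $\rho_2$ could accumulate at zero inside $\mathcal{H}_0$, and no common $\lambda$ witnessing the operator inequalities above need exist.
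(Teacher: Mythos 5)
Your argument is correct and is essentially the paper's own proof: the paper likewise writes $\rho_0$ as a convex mixture of $\rho_1$ and a residual density matrix $\sigma_0$ (with mixing weight $\alpha_{0,\min}/\alpha_{1,\max}$, chosen exactly as your $\lambda$), realises this mixture as a preparation procedure via convexity, applies Hardy noncontextuality to equate the supports, and then symmetrises to get the reverse inclusion. The only point to flag is that the theorem as stated also covers measurements and transformations; your proof, like the paper's main text, treats only preparations, with the remaining cases handled by analogous arguments in the paper's appendix.
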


\begin{proof}
We will present the proof for preparation procedures here and defer the proofs 
for transformations and measurements to the appendix. From 
\cref{eq:ProbOp_implies_PossOp}, possibilistic preparation 
noncontextuality implies Hardy preparation noncontextuality: any elements related under the probabilistic operational relation are also related under the possibilistic operational relation.  So, we only need to 
prove the converse, that is, that Hardy noncontextuality implies that density 
matrices with the same kernel can be represented by distributions with the same 
support over $\Lambda$.

Let $\rho_0$ and $\rho_1$ be density matrices with the same kernel and with 
smallest nonzero and largest eigenvalues $\alpha_{j,\min}$ and $\alpha_{j,\max}$ for 
$j=0,1$ respectively. We define
\begin{equation}
\sigma_0= \frac{1}{1-\frac{\alpha_{0,\min}}{\alpha_{1,\max}}}\left(\rho_0 - 
\frac{\alpha_{0,\min}}{\alpha_{1,\max}}\rho_1\right).
\end{equation}   
We see that $\sigma_0$ is positive semi-definite, as the largest eigenvalue of 
$\frac{\alpha_{0,\min}}{\alpha_{1,\max}}\rho_{1}$ is 
$\alpha_{0,\min}$ and $\rho_0$ and $\rho_1$ have the same kernel. We note that we must have that $\alpha_{0,\min}\leq\alpha_{1,\max}$, since they have support on subspaces of equal dimensions, say, $d$, and so both $\rho_1$ and $\rho_0$ have $d$ nonzero eigenvalues that sum to 1. Hence, we have $\alpha_{0,\min}=\alpha_{1,\max}$ only in the case that each is the completely mixed state over their support, and $\alpha_{0,\min}<\alpha_{1,\max}$ otherwise. It is easily verified that $\tr(\sigma_0)=1$, so $\sigma_0$ is a density matrix.

Hence $\rho_0$ can be prepared as a convex mixture of $\rho_{1}$ and 
$\sigma_0$ as
\begin{equation}
\rho_0 =\left(1-\frac{\alpha_{0,\min}}{\alpha_{1,\max}}\right) \sigma_0 + 
\frac{\alpha_{0,\min}}{\alpha_{1,\max}}\rho_1.
\end{equation}
Therefore there exists a preparation procedure that prepares 
$\rho_0$ such that 
\begin{equation}
\mathcal{S}(\mu_{\rho_0}) = 
\mathcal{S}(\mu_{\sigma_0)}\cup \mathcal{S}(\mu_{\rho_{1}}) 
\supseteq \mathcal{S}(\mu_{\rho_{1}})
\end{equation}
We can then repeat the argument for $\rho_1$. Therefore $\mathcal{S}(\mu_{\rho_0}) = \mathcal{S}(\mu_{\rho_{1}})$ for 
any $\rho_0$, $\rho_1$ with the same kernel, completing the proof.
\end{proof}
One immediate consequence of this is that it permits the more natural motivation of possibilistic noncontextuality to be applied to the better-known concept of Hardy noncontextuality, providing a motivation for its assumption independent of the fact that it is a weaker noncontextuality assumption than probabilistic noncontextuality. A comparison of these different operational and ontological assumptions can be found in figure \ref{fig1}.

\section{Restrictions on ontological models due to possibilistic noncontextuality}\label{sec:restrictions}
The assumption of probabilistic preparation noncontextuality is incompatible 
with the operational predictions of quantum theory\,\cite{Spek2005}. We now show 
that even the seemingly weaker notion of possibilistic noncontextuality is 
already incompatible with quantum theory. 

\begin{thm}\label{thm:poss_con_4preps}
Any ontological model of quantum mechanics is possibilistically contextual for 
preparations.
\end{thm}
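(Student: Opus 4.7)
The plan is to assume possibilistic preparation noncontextuality and derive a contradiction from a small witness involving just four pure qubit preparations together with two auxiliary convex mixtures. I would work in a two-dimensional subsystem and write $\mathcal{S}_\psi$ for the ontic support associated with the preparation of $\ketbra{\psi}$. Since the projective measurement $\{\ketbra{0},\ketbra{1}\}$ perfectly distinguishes its two eigenstates, the standard response-function argument forces $\mathcal{S}_0\cap\mathcal{S}_1=\emptyset$, and the same reasoning applied in the conjugate basis gives $\mathcal{S}_+\cap\mathcal{S}_-=\emptyset$.

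Next I would exploit convexity in combination with the assumed noncontextuality. The convex mixture of the $\ket{0}$ and $\ket{1}$ preparations gives an ontological representation of $I/2$ with support $\mathcal{S}_0\cup\mathcal{S}_1$, while the analogous mixture in the $\{\ket{+},\ket{-}\}$ basis gives $\mathcal{S}_+\cup\mathcal{S}_-$. Because $I/2$ has trivial kernel, possibilistic noncontextuality identifies these supports, yielding $\mathcal{S}_0\cup\mathcal{S}_1=\mathcal{S}_+\cup\mathcal{S}_-$. The crucial extra ingredient is the mixed-basis density matrix $\rho_+=\frac{1}{2}\ketbra{0}+\frac{1}{2}\ketbra{+}$ together with $\rho_-=\frac{1}{2}\ketbra{0}+\frac{1}{2}\ketbra{-}$; a direct determinant check shows that both $\rho_\pm$ are full-rank on the qubit and therefore share a kernel with $I/2$. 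Preparing $\rho_+$ by convex mixing of $\ket{0}$ and $\ket{+}$ yields ontic support $\mathcal{S}_0\cup\mathcal{S}_+$, which possibilistic noncontextuality forces to equal $\mathcal{S}_0\cup\mathcal{S}_1$; since $\mathcal{S}_0\cap\mathcal{S}_1=\emptyset$, this gives $\mathcal{S}_1\subseteq\mathcal{S}_+$. The identical argument applied to $\rho_-$ yields $\mathcal{S}_1\subseteq\mathcal{S}_-$. Since $\mathcal{S}_+$ and $\mathcal{S}_-$ are disjoint, this forces $\mathcal{S}_1=\emptyset$, contradicting that $\mu_{\ketbra{1}}$ is a probability measure.

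The main point to handle carefully is the role of the paper's standing convexity assumption: every step that equates the support of a mixed-preparation distribution with the union of the component supports relies on it, and dropping convexity would leave the conclusion open, as the paper already notes for pure-state quantum mechanics restricted to unitary dynamics. Beyond that, the argument is essentially a short set-theoretic calculation once the four pure preparations and the two auxiliary full-rank mixtures are chosen, and it extends immediately to any finite dimension by restricting to a qubit subspace.
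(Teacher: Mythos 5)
Your proof is correct, and it rests on the same two engines as the paper's argument: all full-rank qubit density matrices are possibilistically equivalent (so their preparations' ontic supports coincide), orthogonal pure states have disjoint supports, and convexity lets you write the support of a mixture as the union of the component supports. Where you genuinely diverge is in the choice of witness and the endgame. The paper mixes each of $\ketbra{\phi}$ and $\ketbra{-\phi}$ with the antipode of the \emph{other} basis, intersects with $\mathcal{S}(\ketbra{\nicefrac{\pi}{2}})$ to obtain the inclusion $\mathcal{S}(\ketbra{\nicefrac{\pi}{2}})\subseteq\mathcal{S}(\ketbra{\phi})$ between two \emph{non-orthogonal} pure states, and then must invoke one further piece of quantum phenomenology --- the nonzero Born probability of the $\ket{\pi+\phi}$ outcome --- to reach a contradiction. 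You instead take two mutually unbiased bases plus the auxiliary full-rank mixtures $\frac{1}{2}\ketbra{0}+\frac{1}{2}\ketbra{\pm}$, and the set algebra alone collapses to $\mathcal{S}_1\subseteq\mathcal{S}_+\cap\mathcal{S}_-=\emptyset$, an internal contradiction with $\mu_{\ketbra{1}}$ being a probability measure; no final appeal to a specific measurement statistic is needed. What each buys: your route is self-contained and slightly cleaner at the finish, while the paper's route exhibits the intermediate steering-like inclusion between supports of non-orthogonal states, which is of independent interest and mirrors Hardy's nonlocality argument more closely. One caveat you share with the paper: the step from ``perfectly distinguishable'' to ``disjoint supports'' strictly yields only that the intersection is null with respect to both measures (the paper elides the same measure-zero subtlety), and, as you correctly flag, every union-of-supports step leans on the standing convexity assumption.
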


\begin{proof}
Let 
\begin{equation}\label{eq:theta_states}
\ket{\phi} = \cos\left(\frac{\phi}{2}\right) \ket{0} + \sin\left(\frac{\phi}{2}\right) \ket{1}
\end{equation}
for some $0<\phi<\nicefrac{\pi}{2}$. We note that all mixed states are related under the possibilistic operational relation. In particular then, a mixed state made up of $\ketbra{\phi}$ and $\ketbra{-\nicefrac{\pi}{2}}$, and one made up of $\ketbra{-\phi}$ and $\ketbra{\nicefrac{\pi}{2}}$ are related. Therefore, such matrices must have the same support.
\begin{equation}
 \mathcal{S}(\ketbra{-\phi}) \cup\mathcal{S}\left(\ketbra{\frac{\pi}{2}}\right) = 
 \mathcal{S}(\ketbra{\phi})\cup\mathcal{S}\left(\ketbra{-\frac{\pi}{2}}\right).
\end{equation}
Taking the intersection of both sides 
with $\mathcal{S}(\ketbra{\frac{\pi}{2}})$ gives
\begin{eqnarray}\label{eq:steer_sup}
  \mathcal{S}\left(\ketbra{\frac{\pi}{2}}\right) &= &
  \left[\mathcal{S}(\ketbra{\phi}) \cap 
\mathcal{S}\left(\ketbra{\frac{\pi}{2}}\right)\right]\cup
\left[\mathcal{S}\left(\ketbra{-\frac{\pi}{2}}\right)\cap 
\mathcal{S}\left(\ketbra{\frac{\pi}{2}}\right)\right]  \\
&= &\left[\mathcal{S}(\ketbra{\phi}) \cap 
\mathcal{S}\left(\ketbra{\frac{\pi}{2}}\right)\right] \\
&\subseteq& \mathcal{S}(\ketbra{\phi})
\end{eqnarray}
where the second line follows since $\ket{\frac{\pi}{2}}$ and $\ket{-\frac{\pi}{2}}$ are completely operationally distinguishable, and hence must be fully ontologically distinct. This shows that the set of states consistent with $\ketbra{\nicefrac{\pi}{2}}$ is a subset of those consistent with $\ketbra{\phi}$, and so any operational predictions compatible with a preparation of $\ketbra{\nicefrac{\pi}{2}}$must also be consistent with a preparation of $\ketbra{\phi}$. However, 
in order to be consistent with quantum mechanics, preparing $\ket{\frac{\pi}{2}}$ and measuring in the basis 
$\{\ket{\phi},\ket{\pi+\phi}\}$ must give the outcome $\ket{\pi+\phi}$ 
with nonzero probability.
\end{proof}

\begin{figure}[h]
\begin{center}
\begin{tikzpicture}[line cap=round,line join=round,>=triangle 45,x=0.8cm,y=0.8cm]
\clip(-5.4,-4.5) rectangle (5.4,5.);

\draw(0.,0.) circle (3.2cm);
\draw (-0.9999999990819624,3.872983346444453)-- (4.,0.);
\draw (0.999999999081962,3.872983346444453)-- (-4.,0.);
\draw [dash pattern=on 2pt off 2pt] (-4.,0.)-- (4.,0.);
\draw [dash pattern=on 2pt off 2pt] (0.9999999990819624,3.872983346444453)-- (-0.9999999990819624,-3.872983346444453);
\begin{scriptsize}
\draw [fill=uuuuuu] (0.,0.) circle (1.5pt);
\draw[color=uuuuuu] (-0.0034097281431338944,-0.25317077583665387) node {};
\draw [fill=xdxdff] (-0.9999999990819624,3.872983346444453) circle (1.5pt);
\draw[color=xdxdff] (-1.1863415516733,4.3) node {\large$\ket{-\phi}$};
\draw [fill=xdxdff] (-0.9999999990819624,-3.872983346444453) circle (1.5pt);
\draw[color=xdxdff] (-1.3,-4.3) node {\large$\ket{\pi+\phi}$};
\draw [fill=xdxdff] (4.,0.) circle (1.5pt);
\draw[color=xdxdff] (4.8,0.) node {\large$\ket{\nicefrac{\pi}{2}}$};
\draw [fill=xdxdff] (0.999999999081962,3.872983346444453) circle (1.5pt);
\draw[color=xdxdff] (1.2114932257527125,4.3) node {\large$\ket{\phi}$};
\draw [fill=xdxdff] (-4.,0.) circle (1.5pt);
\draw[color=xdxdff] (-4.8,0.) node {\large$\ket{\nicefrac{-\pi}{2}}$};
\end{scriptsize}
\end{tikzpicture}
\end{center}
\caption{Diagrams showing states that can be used in the proof of 
\cref{thm:poss_con_4preps}.}
\label{steer}
\end{figure}

The above proof, the states used in which can be seen in figure \ref{steer}, is similar to Hardy's proof of Bell's theorem based only upon 
possibilistic arguments~\,\cite{Hard1993}. In some sense, considering these possibilistic restrictions
on ontological models is forcing them to obey the same logical structure as the operational theory, in this case quantum mechanics.
This is because we can think of working with possibilities as a generalised probability model, in which we work with the Boolean (logical)
semiring in which an event is only prescribed the symbol 0 (impossible) or 1 (possible).

%We see that possibilistic contextuality is to Hardy's theorem as probabilistic contextuality is to Bell's theorem. Hardy's theorem is a proof of nonlocality that depends only upon the supports of the ontic variables whereas Bell's theorem is a proof of nonlocality that depends upon the specific probabilistic structure of the operational predictions of the theory. 

Another known restriction on ontological models under the assumption of probabilistic 
measurement noncontextuality is that such models cannot be $\psi$-epistemic, a 
result first shown by Morris\,\cite{Morr2009} and by Chen and Montina\,\cite{Chen2011}. Here, we will
show an analogous result, in which we use a strictly weaker assumption than probabilistic measurement noncontextuality.

We define two sets associated with a response function for outcome $k$ of a measurement $\mathcal{M}$ in an ontological model:
\begin{eqnarray}
 \mathcal{R}[\xi_M(k|\lambda)] &:=& \{\lambda:\xi_\mathcal{M}(k|\lambda) = 1\} \\
 \mathcal{T}[\xi_M(k|\lambda)] &:=& \{\lambda:\xi_\mathcal{M}(k|\lambda) > 0\}.
\end{eqnarray}
We will show that any ontological model in which these sets depend only upon the POVM element $E_k$ associated with outcome $k$,  and so can be denoted by $\mathcal{R}(E_k)$ and 
$\mathcal{T}(E_k)$ respectively, imposes restrictions on which states can have ontic overlap. We will call such a model \emph{faithful}, as it is a stronger form of the faithfulness condition introduced in\,\cite{Rudo2006}. We note that this condition is implied by probabilistic noncontextuality.

\begin{thm}\label{thm:psi}
Let $\ket{\psi_1}$ and $\ket{\psi_2}$ be two states in a Hilbert space of dimension $d\geq 2^N$, with $\left|\braket{\psi_1}{\psi_2}\right|^2\leq \cos^{2N}(2\tan^{-1}(2^{\nicefrac{1}{2N}}-1))$. Then in any faithful ontological model they have disjoint supports:
\begin{equation}
\mathcal{S}(\mu_{\ket{\psi_1}})\cap\mathcal{S}(\mu_{\ket{\psi_2}})=\emptyset.
\end{equation}
\end{thm}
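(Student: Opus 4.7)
The plan is to adapt the Pusey--Barrett--Rudolph (PBR) style antidistinguishability argument to the single-system setting, using faithfulness of measurements in place of the preparation independence postulate that is usually invoked for such constructions. The overlap bound in the statement is recognisable as the squared threshold at which a natural family of $2^N$ tensor-product states in $(\mathbb{C}^2)^{\otimes N}$ admits a PBR-type antidistinguishing measurement, and so the construction will be carried out inside a $2^N$-dimensional subspace of the ambient Hilbert space.

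Concretely, since $d\geq 2^N$, I would restrict to a $2^N$-dimensional subspace containing both $\ket{\psi_1}$ and $\ket{\psi_2}$ and identify it with $(\mathbb{C}^2)^{\otimes N}$. Choose qubit states $\ket{\phi_0},\ket{\phi_1}\in\mathbb{C}^2$ with $|\braket{\phi_0}{\phi_1}|=|\braket{\psi_1}{\psi_2}|^{1/N}$, and by a suitable choice of basis arrange $\ket{\psi_1}=\ket{\phi_0}^{\otimes N}$ and $\ket{\psi_2}=\ket{\phi_1}^{\otimes N}$. The hypothesis then reduces to $|\braket{\phi_0}{\phi_1}|\leq\cos(2\tan^{-1}(2^{1/(2N)}-1))$ (which for $N=1$ reads $1/\sqrt{2}$, recovering PBR's original threshold). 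Enumerate the $2^N$ product states $\ket{\chi_{\vec{x}}}=\bigotimes_{k=1}^{N}\ket{\phi_{x_k}}$ for $\vec{x}\in\{0,1\}^N$, so that $\ket{\psi_1}=\ket{\chi_{\vec{0}}}$ and $\ket{\psi_2}=\ket{\chi_{\vec{1}}}$.

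At this threshold an explicit symmetric POVM $\{E_{\vec{x}}\}$ exists with $\bra{\chi_{\vec{x}}}E_{\vec{x}}\ket{\chi_{\vec{x}}}=0$ for every $\vec{x}$; I would construct each $E_{\vec{x}}$ in a form that makes manifest the $\mathbb{Z}_2^N$ bit-flip symmetry of the $\ket{\chi_{\vec{x}}}$ and verify positivity and the completeness relation using that symmetry. Faithfulness then immediately implies that for each $\vec{x}$ the support $\mathcal{S}(\mu_{\ket{\chi_{\vec{x}}}})$ is disjoint from $\mathcal{T}(E_{\vec{x}})$, since any ontic state in the intersection would contribute strictly positive probability to the antidistinguished outcome.

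The last step is to upgrade these per-state constraints into the disjointness $\mathcal{S}(\mu_{\ket{\psi_1}})\cap\mathcal{S}(\mu_{\ket{\psi_2}})=\emptyset$, and this is where I expect the main obstacle. A single inclusion $\mathcal{S}(\mu_{\ket{\chi_{\vec{x}}}})\subseteq\bigcup_{\vec{y}\neq\vec{x}}\mathcal{T}(E_{\vec{y}})$ is too weak to force two corners of the hypercube to have disjoint supports. The plan is to couple the $2^N$ antidistinguishability constraints with convex mixtures of the $\ketbra{\chi_{\vec{x}}}$ whose density matrices have supports containing those of $\mu_{\ket{\psi_1}}$ and $\mu_{\ket{\psi_2}}$, and then iterate over the hypercubic bit-flip symmetry so that any hypothetical common $\lambda\in\mathcal{S}(\mu_{\ket{\psi_1}})\cap\mathcal{S}(\mu_{\ket{\psi_2}})$ is driven out of every $\mathcal{T}(E_{\vec{x}})$, contradicting $\bigcup_{\vec{x}}\mathcal{T}(E_{\vec{x}})=\Lambda$. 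Executing this combinatorial step with only the faithfulness hypothesis --- which only ties $\mathcal{R}(E)$ and $\mathcal{T}(E)$ to the POVM element $E$ rather than fixing the full statistics of any preparation --- is the delicate part, and will likely require decomposing the mixtures into alternative ensembles and invoking the fact, established during the discussion of preparation noncontextuality, that ontic supports are determined by density matrices.
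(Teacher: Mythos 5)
You have the right ingredients (the PBR antidistinguishing measurement at this overlap threshold, plus faithfulness), and you have correctly diagnosed that the per-state constraints $\mathcal{S}(\mu_{\ket{\chi_{\vec{x}}}})\cap\mathcal{T}(E_{\vec{x}})=\emptyset$ are not enough on their own --- but the step you leave open is precisely the heart of the proof, and the completion you sketch would not work. The missing idea is a second family of measurements: for each $\vec{\nu}\in\{a,b\}^N$, measure each tensor factor locally in either $\{\ket{a},\ket{\bar a}\}$ or $\{\ket{b},\ket{\bar b}\}$ according to $\vec{\nu}$. In such a measurement, every outcome except $\ket{\vec{\nu}}$ is orthogonal to at least one of $\ket{a}^{\otimes N}$, $\ket{b}^{\otimes N}$, so any $\lambda$ in $\mathcal{S}(\mu_{\ket{\psi_1}})\cap\mathcal{S}(\mu_{\ket{\psi_2}})$ must (up to measure zero) assign probability $1$ to the outcome $\ket{\vec{\nu}}$, i.e.\ $\lambda\in\mathcal{R}(\ketbra{\vec{\nu}})$ for \emph{every} $\vec{\nu}$, not just the two corners $\vec{0}$ and $\vec{1}$. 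Faithfulness then lets you carry this determinism into a context containing both $\ketbra{\vec{\nu}}$ and the PBR projector $\Pi_{\vec{\nu}}$ (they are orthogonal, so such a PVM exists), forcing $\mathcal{R}(\ketbra{\vec{\nu}})\cap\mathcal{T}(\Pi_{\vec{\nu}})=\emptyset$; since some $\Pi_{\vec{\nu}}$ must fire, $\bigcap_{\vec{\nu}}\mathcal{R}(\ketbra{\vec{\nu}})=\emptyset$ and the theorem follows. Note that the argument needs the antidistinguishing measurement only for the $2^N$ product states $\ket{\vec{\nu}}$, and it uses faithfulness twice (once for $\mathcal{R}$ of the corner projectors across contexts, once for $\mathcal{T}$ of the $\Pi_{\vec{\nu}}$).

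Your proposed substitute --- coupling the antidistinguishability constraints with convex mixtures of the $\ketbra{\chi_{\vec{x}}}$ and ``the fact that ontic supports are determined by density matrices'' --- is not available here: that fact was derived under possibilistic/Hardy \emph{preparation} noncontextuality, whereas Theorem~\ref{thm:psi} assumes only faithfulness, which is a condition on the representation of measurement effects and says nothing about how preparations decompose. Without either preparation independence (PBR's route) or the local-measurement argument above, a $\lambda$ common to the two extreme corners is only excluded from $\mathcal{T}(E_{\vec{0}})$ and $\mathcal{T}(E_{\vec{1}})$, and no amount of mixing the corner states will drive it out of the remaining $2^N-2$ sets.
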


\begin{proof}
Let $\mathcal{P}_1$ and $\mathcal{P}_2$ be fixed preparation procedures for 
$\ket{\psi_1}$ and $\ket{\psi_2}$ and let $\ket{\psi_1} = \ket{a}^{\otimes 
N}$ and $\ket{\psi_2} = \ket{b}^{\otimes N}$
where $\ket{a}=\ket{\phi}$ and $\ket{b}=\ket{-\phi}$ from 
\cref{eq:theta_states}. 
This form for our states can be made without loss of generality, since for any
pair of states $\ket{\varphi_1}$, $\ket{\varphi_2}$ , there exists a unitary rotation mapping them into a pair of states of our chosen form, and unitary actions must preserve overlaps of support. If $\left|\braket{\varphi_1}{\varphi_2}\right|^2=\left|\braket{\psi_1}{\psi_2}\right|^2$, then there exists a unitary $U$ such that $U\ket{\varphi_1}=\ket{\psi_1}$ and $U\ket{\varphi_2}=\ket{\psi_2}$, and so any state that is in the support of $\ket{\varphi_1}$ and $\ket{\varphi_2}$ is mapped by such a transformation into a state that is in the support of $\ket{\psi_1}$ and $\ket{\psi_2}$ since this is a legitimate preparation procedure for those states.

By considering each of the measurements 
formed by independently measuring $\{\ket{a},\ket{\bar a}\}$ or 
$\{\ket{b},\ket{\bar b}\}$ on each of the $N$ qubits (where $\ket{\bar a}$ denotes the 
orthogonal state to $\ket{a}$, and similarly for $\ket{b}$), we have
\begin{equation}
\mathcal{S}(\mu_{\ket{\psi_1}})\cap\mathcal{S}(\mu_{\ket{\psi_2}}) \subseteq 
\cap_{\vec{\nu}\in\{a,b\}^N} 
\mathcal{R}(\ket{\vec{\nu}}\!\bra{\vec{\nu}}),
\end{equation}
except perhaps on an unphysical set of measure 0, as either $\ket{a}^{\otimes N}$ or $\ket{b}^{\otimes N}$ is orthogonal to all 
but one outcome in each measurement. To see this, consider the case in which we measure
with respect to $\{\ket{a},\ket{\bar a}\}$ on the first qubit, and $\{\ket{b},\ket{\bar b}\}$ on the second.
Here, $\ket{a}\ket{a}$ is incompatible with $\ket{\bar a}\ket{b}$ and $\ket{\bar a}\ket{\bar b}$, with a similar condition holding for $\ket{b}\ket{b}$. The only outcome that is compatible with both $\ket{a}\ket{a}$ and $\ket{b}\ket{b}$ is 
$\ket{a}\ket{b}$.
We note that by assumption, we have $\sprod{a}{b}\leq \cos^{2}(2\tan^{-1}(2^{\nicefrac{1}{2N}}-1))$ and so we can apply the construction in\,\cite{PBR2012}, so there exists a rank-1 projector-valued measure 
(PVM) $M=\{\Pi_{\vec{\nu}}:\vec{\nu}\in\{a,b\}^N\}$ such that $\Pi_{\vec{\nu}} 
\ket{\vec{\nu}}\!\bra{\vec{\nu}} = 0$ for all $\vec{\nu}$~\,\cite{PBR2012}. For 
all $\vec{\nu}$, $\Pi_{\vec{\nu}} \ket{\vec{\nu}}\!\bra{\vec{\nu}} = 0$ and so 
there exist PVMs containing $\Pi_{\vec{\nu}}$ and 
$\ket{\vec{\nu}}\!\bra{\vec{\nu}}$, for example $\{\ketbra{\vec{\nu}},\Pi_{\vec{\nu}}, \mathbb{1} -\ketbra{\vec{\nu}} - \Pi_{\vec{\nu}} \}$. Hence
\begin{equation}
\mathcal{R}(\ket{\vec{\nu}}\!\bra{\vec{\nu}})\cap \mathcal{T}(\Pi_{\vec{\nu}}) 
= \emptyset
\end{equation}
for all $\vec{\nu}$, since a lambda in the intersection of $\mathcal{R}(\ket{\vec{\nu}}\!\bra{\vec{\nu}}$ and $\mathcal{T}(\Pi_{\vec{\nu}})$ would, in a measurement containing both such as the one given above, have to yield the $\ketbra{\vec{\nu}}$ outcome with probability 1, and the $\Pi_{\vec{\nu}}$ outcome with nonzero probability, a clear contradiction.
Therefore 
\begin{eqnarray}
\mathcal{S}(\mu_{\ket{\psi_1}}) \cap\mathcal{S}(\mu_{\ket{\psi_2}})
&\subseteq \left[\cap_{\vec{\nu}\in\{a,b\}^n} 
\mathcal{R}(\ket{\vec{\nu}}\!\bra{\vec{\nu}})\right]\cap \Lambda \\
&\subseteq \left[\cap_{\vec{\nu}\in\{a,b\}^n} 
\mathcal{R}(\ket{\vec{\nu}}\!\bra{\vec{\nu}})\right]\cap 
\left[\cup_{\vec{\nu}\in\{a,b\}^n} 
\mathcal{T}(\Pi_{\vec{\nu}})\right] \\
&\subseteq \emptyset,
\end{eqnarray}
giving the desired result, where the second line follows as some outcome must 
occur when the measurement $M$ is performed.
\end{proof}

Since $\lim_{N\rightarrow\infty}\cos^{2N}(2\tan^{-1}(2^{\nicefrac{1}{2N}}-1))=1$, which can be seen by observing that the expression has a Laurent series expansion at infinity of $1-\nicefrac{1}{x}+O(\nicefrac{1}{x^2})$, any two states can be shown to be ontologically disjoint if we are equipped with a large enough Hilbert space. The result does however differ from those of Morris\,\cite{Morr2009} and of Chen and Montina\,\cite{Chen2011} insofar as this dimensional constraint is present.  As in the PBR theorem\,\cite{PBR2012}, we need a large quantum dimension in order to be able to have a measurement that can distinguish between states with large overlap. We note that in particular, any faithful ontological models for an infinite dimensional space will be $\psi$-ontic, since no two such states can share any ontological support. In general, any quantum system can be thought of as part of a larger quantum system, so results such as this demonstrate an inherent tension between possibilistic measurement noncontextuality and $\psi$-epistemic ontological models. Theorem \ref{thm:psi} can also be thought of as demonstrating that the assumption of preparation independence used by PBR can be replaced by that of faithfulness, as applied here.

We are left to consider which noncontextuality assumptions mandate the property
of faithfulness for ontological models. In particular, is it a consequence of any
of the noncontextuality assumptions we have explored? We shall see that this is
dependent on the requirements one sets on an operational theory. We shall need to introduce the formulation of probabilistic noncontextuality as it is applied to measurements.

\begin{defn}[\textbf{ProbOp-M}] Two measurement outcomes $k^{(1)}_{\mathcal{M}_1}$, belonging to a measurement procedure $\mathcal{M}_1$, and $k^{(2)}_{\mathcal{M}_2}$, belonging to a measurement procedure
$\mathcal{M}_2$, in an operational model are \textit{probabilistically
equivalent} (denoted
$k^{(1)}_{\mathcal{M}_1}\tildeprob k^{(2)}_{\mathcal{P}_2}$) if for all preparation procedures $\mathcal{P}$,
\begin{equation}
{\rm Pr}(k^{(1)}|\mathcal{P},\mathcal{M}_1)=
{\rm Pr}(k^{(2)}|\mathcal{P},\mathcal{M}_2).
\end{equation}
\end{defn}

\begin{defn}[\textbf{ProbOn-M}] Two measurement effects
$\xi_{M_1}(k_1|\lambda)$, $\xi_{M_2}(k_2|\lambda)$ in an ontological model are
\emph{probabilistically equivalent}, denoted
$\xi_{M_1}(k_1|\lambda)\onprob\xi_{M_2}(k_2|\lambda)$, if and only if
$\xi_{M_1}(k_1|\lambda)=\xi_{M_2}(k_2|\lambda)$ $\forall\lambda$.
\end{defn}

\begin{defn}[\textbf{Probabilistic Noncontextuality for Measurements}] For any two preparations
$k^{(1)}_{\mathcal{M}_1}k^{(2)}_{\mathcal{M}_2}$ related by \textbf{ProbOp-M}, their ontological
representations $\xi_{M_1}(k_1|\lambda),\xi_{M_2}(k_2|\lambda)$ are related by
\textbf{ProbOn-M}. That is, the probabilistic noncontextuality assumption states that
\begin{equation}
\mathcal{P}_1\tildeprob \mathcal{P}_2 \Rightarrow \xi_{M_1}(k_1|\lambda) \onprob \xi_{M_2}(k_2|\lambda).
\end{equation}
\end{defn}
We see that the notion of measurement noncontextuality captures two related ideas: that any way of performing a measurement (for example, a specific Naimark dilation for a POVM) yields to ontologically identical predictions; and that the probabilities of specific outcomes are independent of the context in which they appear.

Under this assumption of possibilistic measurement noncontextuality,
first note that both the conditions $\xi_M(k|\lambda)=1$ and $\xi_M(k|\lambda)>0$ can be recognised by considering only the possibilities of alternative outcomes.
By the definition of possibilistic noncontextuality, we then demand that these possibilities, and equally the impossibilities,
are independent of the specific context in which an observable appears. However, for a system in a given ontic state one cannot necessarily conclude that
an outcome certain in one context will be certain in all other contexts in which it appears.
 The noncontextuality assumption taken at face value will only tell us that it cannot be impossible in any other context.
However, we can note that in any context in which this outcome, $E$, appears, we can coarse-grain the other measurement results
by ``forgetting'' which of the alternative outcomes happened. These coarse-grainings yield identical probabilistic predictions,
and in a context in which $E$ is certain, the coarse-graining is impossible. Therefore, these coarse-grainings are impossible in all
contexts, and $E$ is certain in all contexts in which it appears.

The weakness of this argument lies in whether or not we can really consider the coarse-grainings to yield identical probabilistic predictions
with each other. If our scenario is restricted to preparing a quantum state and then measuring it, then this assumption is justified; this is
the position taken for example by Spekkens\,\cite{Spek2005}. However, if we consider that a second measurement procedure could be implemented afterwards, necessitating the existence of a state update rule, then the assumption does not hold and we must use a different notion
of noncontextuality. One such candidate would be a \emph{trichotomistic} notion of noncontextuality, in which our ontological relation, as applied to two related measurements, is that each outcome must be considered impossible, possible-but-not-certain, or certain, independent of the context in which it appears. We can assign these classes the symbols 0, $\nicefrac{1}{2}$ and 1\footnote{This concept of noncontextuality is impossible to replicate within the Abramsky-Brandenburger sheaf-theoretic framework because these values do not form a semiring; coarse-graining two elements that are possible-but-not-certain can be either possible-but-not-certain or certain. Any framework for contextuality based on matching probabilistic predictions in semirings cannot express this sort of notion.}. In any case, this trichotomistic notion of contextuality is also strictly weaker than probabilistic noncontextuality and so a weaker notion is used for this proof regardless of one's position of the equivalence of these coarse-grainings over measurement outcomes.

\section{Tests of possibilistic noncontextuality cannot be robust to experimental error}

Recently, probabilistic noncontextuality inequalities have been demonstrated 
and subsequently shown to be experimentally 
violated\,\cite{Kunj2015,Mazu2016}. In order to avoid finite precision 
loopholes, the probabilistic noncontextuality inequalities are derived assuming 
probabilistic noncontextuality for both preparations and measurements. However, 
as we now prove, no such inequalities can be demonstrated for possibilistic 
noncontextuality.

\begin{thm}
Any operational prediction of quantum mechanics can be approximated arbitrarily well by an ontological model obeying universal possibilistic noncontextuality; that is, the conjunction of possibilistic noncontextuality for preparations and for measurements. 
\end{thm}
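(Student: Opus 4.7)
The approach is to take any ontological model of quantum mechanics --- the Beltrametti-Bugajski model being the simplest --- and perturb it so that every preparation distribution has support equal to all of $\Lambda$ and every response function is bounded below by a strictly positive constant. Once this is done, the ontological possibilistic relations become degenerate (all supports equal, all response functions nonvanishing), so both possibilistic preparation and measurement noncontextuality hold automatically. The cost of this smearing is a controllable shift in the operational probabilities.

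Concretely, take $\Lambda=\mathcal{H}$ and let $\nu$ be the Haar measure on pure states, so that $\int \mathrm{d}\lambda\,\nu(\lambda)\,\ketbra{\lambda}=\mathbb{1}/d$. For parameters $\epsilon,\delta\in(0,1)$ define
\[
\tilde\mu_{\mathcal{P}} = (1-\epsilon)\,\mu_{\mathcal{P}}^{BB} + \epsilon\,\nu,\qquad
\tilde\xi_{\mathcal{M}}(k|\lambda) = (1-\delta)\,\mathrm{tr}(E_k\ketbra{\lambda}) + \delta/K,
\]
where $\mu_{\mathcal{P}}^{BB}$ is the Beltrametti-Bugajski representation of $\mathcal{P}$ and $K$ is the cardinality of $\mathcal{M}$. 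Each $\tilde\mu_{\mathcal{P}}$ is a probability measure and each $\tilde\xi_{\mathcal{M}}(\cdot|\lambda)$ is a probability distribution over outcomes (the response functions sum to $1$ by construction).

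Verification of universal possibilistic noncontextuality is then immediate: $\nu$ has full support on $\mathcal{H}$ so $\mathcal{S}(\tilde\mu_{\mathcal{P}})=\Lambda$ for every $\mathcal{P}$, and $\tilde\xi_{\mathcal{M}}(k|\lambda)\ge\delta/K>0$ so $\mathcal{T}[\tilde\xi_{\mathcal{M}}(k|\cdot)]=\Lambda$ for every outcome. All preparation supports coincide, as do all response-function supports, so whichever operational possibilistic equivalences hold in the perturbed theory, the required ontological equivalences hold trivially.

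For the approximation, a direct expansion gives
\[
\widetilde{\mathrm{Pr}}(k|\mathcal{P},\mathcal{M}) = (1-\epsilon)(1-\delta)\,\mathrm{tr}(E_k\rho_{\mathcal{P}}) + (1-\epsilon)\delta/K + \epsilon(1-\delta)\,\mathrm{tr}(E_k)/d + \epsilon\delta/K,
\]
which differs from the Born-rule value $\mathrm{tr}(E_k\rho_{\mathcal{P}})$ by at most $2(\epsilon+\delta)$, uniformly in $\rho_{\mathcal{P}}$, $E_k$, $K$, and $d$. Sending $\epsilon,\delta\to 0$ therefore reproduces every quantum prediction arbitrarily well. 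The main point to watch --- more a design choice than a real obstacle --- is that since both operational and ontological possibilistic relations depend only on zero-patterns, any amount of uniform smearing obliterates them; using the maximally mixed state as the preparation background and a uniform $1/K$ floor for the measurement noise keeps the error bound independent of Hilbert-space dimension and measurement cardinality, and the proof is otherwise calculation-free.
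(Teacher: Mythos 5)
Your proposal is correct and takes essentially the same route as the paper: smear the Beltrametti--Bugajski model with the Haar measure so that every preparation has full ontic support (the paper does exactly this via $\rho_\epsilon=(1-\epsilon)\rho+\epsilon I_d/d$ and $\mu_{\ket{\psi_\epsilon}}=(1-\epsilon)\delta_\psi+\epsilon\mu_H$), rendering the possibilistic ontological relations trivially satisfied. The only difference is that you explicitly carry out the measurement-side smearing and the uniform $2(\epsilon+\delta)$ error bound, which the paper dispatches with ``the same argument also applies to transformations and measurements.''
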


\begin{proof}
Let $\rho_\epsilon = (1-\epsilon)\rho + \epsilon I_d/d$ for any fixed 
$\epsilon>0$ and any state $\rho$. For any two density matrices $\rho$ and 
$\sigma$,
\begin{equation}
\rho_\epsilon \tildeposs \sigma_\epsilon.
\end{equation} 
Therefore we can approximate all quantum mechanical predictions by operators 
that are all possibilistically equivalent. The Beltrametti-Bugajski model then 
gives a possibilistic noncontextual model for all states as follows. Let 
$\Lambda = \mathcal{H}$ and $\mu_\ket{\psi_\epsilon}(\ket{\lambda}) = 
(1-\epsilon)\delta(\lambda-\psi) + \epsilon \mu_H$ where $\mu_H$ is the uniform 
Haar measure. We can then extend the model to general mixed states by 
taking convex combinations of $\ketbra{\psi_\epsilon}$, so that 
$\mathcal{S}(\rho_\epsilon) = \mathcal{H}$ for all states $\rho_\epsilon$. The 
same argument also applies to transformations and measurements. 
\end{proof}

Now, any proposed noncontextuality inequality for probabilistic 
noncontextuality will necessarily be a function only of observed test 
statistics compared to some constant value, so we can see that for any 
inequality, there exist universally possibilistically noncontextual models that 
differ from the quantum statistics by an arbitrarily small value. Further, any 
such inequality of some use would have to be saturated by quantum mechanics. 
Thus, finite experimental uncertainty makes impossible the existence of any conclusive proof 
that reality is not universally possibilistically noncontextual. However, were reality probabilistically noncontextual, there would 
be some amount by which the operational predictions would have to be 
altered and this would be experimentally verifiable. Hence, while any 
particular possibilistically noncontextual theory can be experimentally 
refuted, it is not possible to, for example using an analogue of a Bell 
inequality, to refute the entire set of possibilistically noncontextual 
theories. It is worth noting that while there are claims\,\cite{Lund2009} to have an 
experimental verification of Hardy's theorem, these experiments actually verify 
a \emph{probabilistic} version of Hardy's paradox, enabling the existence of 
inequalities that are not $\epsilon$-closely approximated by quantum-mechanical 
predictions.

\section{Further generalisation}\label{FG}
In the previous sections we have developed a framework for expressing arbitrary
examples of noncontextuality relations, and explored the relationships between different forms. 
In this section, we will demonstrate the power of this framework to consider novel forms of contextuality
assumptions, and explore which ontological and operational relations are sensible under certain subjective criteria. 
Up to this point, all of our operational and ontological relations have been equivalence relations, although our definition of such relations does not require transitivity. In fact, there are natural motivations for
notions of noncontextuality that do not have this property.

We could choose an operational relation identifying elements which have an operational distinguishability of no more than 
$\epsilon$: a nontransitive assumption. Likewise, we can have an ontological relation that says that related preparation
distributions should have a
classical distinguishability bounded above by $f(\epsilon)$ for some $f$; 
enforcing a ``similar elements of the theory are represented similarly in the 
ontological model'' idea. An idea of this kind has been explored by 
Winter\,\cite{Wint2014}. 
\begin{defn}[$\stackrel{\epsilon}{\sim}$]
For two preparation procedures $\mathcal{P}_1$ and $\mathcal{P}_2$, we have $\mathcal{P}_1\sim_\epsilon\mathcal{P}_2\Leftrightarrow D(\mathcal{P}_1,\mathcal{P}_2)\leq\epsilon$, where $D(\mathcal{P}_1,\mathcal{P}_2)$ is the operational distinguishability of the two preparations. Assuming quantum theory, this is given by $D(\mathcal{P}_1,\mathcal{P}_2)= \left|\rho_1-\rho_2\right|_1$, where $\rho_1$ and $\rho_2$ are the density operators associated with preparations $\mathcal{P}_1$ and $\mathcal{P}_2$ respectively, and $|\cdot |_1$ is the trace norm.
\end{defn}
\begin{defn}[$\stackrel{\epsilon}{\approx}_f$]
Two preparation distributions $\mu_1, \mu_2$ are related $\mu_1\approx_\epsilon\mu_2$ if and only if $\int_\Lambda\left|\mu_{\mathcal{P}_1}-\mu_{\mathcal{P}_2}\right|\dif\lambda\leq f(\epsilon)$ 
\end{defn}
Up until now, a notion of noncontextuality has consisted of an operational relation, an ontological relation,
and a simple noncontextuality assumption saying that the former implies the latter. Here we see that we have
sets of operational and ontological relations, and our noncontextuality assumption can be thought of as a
kind of axiom schema: for any preparation procedures $\mathcal{P}_1\stackrel{\epsilon}{\sim}\mathcal{P}_2$, we have
$\mu_{\mathcal{P}_1}\stackrel{\epsilon}{\approx}_f\mu_{\mathcal{P}_2}$.

\section{Concluding Remarks}

We have introduced a general framework for the postulation and interpretation of noncontextuality assumptions; we have seen that this view of a noncontextuality assumption as a statement that allows inference of ontological properties from operational ones is both powerful in its descriptive capacity and its ability to highlight novel assumptions.

Using this framework, we have explored weaker notions of contextuality than that of Spekkens\cite{Spek2005}, and we have seen that one such noncontextuality assumption of this kind, possibilistic contextuality, encapsulates the kind of contextuality present in Hardy's proof of Bell's theorem, and that probabilistic contextuality encapsulates the kind present in that of Bell. An interesting open question, then, is to ask what the corresponding contextuality assumptions are that encapsulate Kochen-Specker contextuality, the analogue of the kind present in the GHZ proof of Bell's theorem\,\cite{Gree1990}.

For any scientific realist, the ultimate aim of scientific inquiry is to be able to make statements about the true state of the world, as far as is possible. To be able to make any statements about the ontological nature of the world, we need some sort of noncontextuality assumption in order to allow our operational knowledge to transfer into this domain. Knowing, then, which noncontextuality assumptions are tenable within a given operational scenario, and the relative strengths of these assumptions, is essential in making any claim, however tentative, about the real nature of things.

\section*{Acknowledgements}

AWS would like to thank Angela Xu for her helpful discussions, and acknowledges support from Cambridge Quantum Computing Limited, and from EPSRC, \emph{via} the Centre for Doctoral Training in Controlled Quantum Dynamics. SDB acknowledges support from the ARC \emph{via} the Centre of Excellence in Engineered Quantum Systems (EQuS), project number CE110001013.

\vspace{1cm}

\bibliography{/Users/andrewsimmons/Documents/Latex/Bib/bibliography}{}

\providecommand{\newblock}{}
\begin{thebibliography}{10}
\expandafter\ifx\csname url\endcsname\relax
  \def\url#1{{\tt #1}}\fi
\expandafter\ifx\csname urlprefix\endcsname\relax\def\urlprefix{URL }\fi
\providecommand{\eprint}[2][]{\url{#2}}
% Bibliography created with iopart-num v2.1
% /biblio/bibtex/contrib/iopart-num

\bibitem{Bell2004}
Bell J~S 2002 {\em Speakable and Unspeakable in Quantum Mechanics\/} (Cambridge
  University Press)

\bibitem{Koch1967}
Kochen S and Specker E 1967 {\em Journal of Mathematics and Mechanics\/} {\bf
  17} 59--87

\bibitem{Spek2016}
Spekkens R~W 2016 Leibniz's principle of the identity of indiscernables as a
  foundational principle for quantum theory Talk at {I}nformation-theoretic
  interpretations of quantum mechanics conference 2016

\bibitem{Brun2002}
Brun T~A, Finkelstein J and Mermin N~D 2002 {\em Phys. Rev. A\/} {\bf 65}
  032315

\bibitem{Cave2002}
Caves C, Fuchs C and Schack R 2002 {\em Phys. Rev. A\/} {\bf 66} 062111

\bibitem{Spek2005}
Spekkens R~W 2005 {\em Phys. Rev. A\/} {\bf 71} 052108

\bibitem{Morr2009}
Morris R 2009 Topics in quantum foundations: Ontological models, and
  distinguishability as a resource

\bibitem{Chen2011}
Chen Z and Montina A 2011 {\em Phys. Rev. A\/} {\bf 83} 042110

\bibitem{HR2007}
Harrigan N and Rudolph T 2007 Ontological models and the interpretation of
  contextuality arXiv:0709.4266

\bibitem{Belt1995}
Beltrametti E~G and Bugajski S 1995 {\em J. Phys. A: Math Gen\/} {\bf 28}
  3329--3343

\bibitem{Fine1982}
Fine A 1982 {\em Phys. Rev. Lett.\/} {\bf 48} 291--295

\bibitem{Klya2008}
Klyachko A~A, Can M~A, Cetiner S~B and Shumovsky A~S 2008 {\em Phys. Rev.
  Lett.\/} {\bf 101} 020403

\bibitem{Abra2011}
Abramsky S and Brandenburger A 2011 {\em New Journal of Physics\/} {\bf 13}
  113036

\bibitem{Acin2015}
Acin A, Fritz T, Leverrier A and Bel\'{e}n~Sainz A 2015 {\em Commun. Math.
  Phys\/} {\bf 334} 533

\bibitem{Mazu2016}
Mazurek M~D, Pusey Matthew F, Kunjwal R, Resch K~J and Spekkens R~W 2016 {\em
  Nature Communications\/} {\bf 7} 11780

\bibitem{Ivan1987}
Ivanovic I~D 1987 {\em Phys. Lett. A\/} {\bf 123} 257--259

\bibitem{Lalp}
Lal R 2014 Personal Communication

\bibitem{Lian2011}
Liang Y~C, Spekkens R~W and Wiseman H~M 2011 {\em Physics Reports\/} {\bf 506}
  1--39

\bibitem{Hard1993}
Hardy L 1993 {\em Phys. Rev. Lett.\/} {\bf 71}(11) 1665--1668

\bibitem{Rudo2006}
Rudolph T 2006 Ontological models for quantum mechanics and the
  {K}ochen-{S}pecker theorem arXiv:quant-ph/0608120

\bibitem{PBR2012}
Pusey M~F, Barrett J and Rudolph T 2012 {\em Nature Physics\/} {\bf 8} 475--478

\bibitem{Kunj2015}
Kunjwal R and Spekkens R~W 2015 {\em Phys. Rev. Lett.\/} {\bf 115} 110403

\bibitem{Lund2009}
Lundeen J~S and Steinberg A~M 2009 {\em Phys. Rev. Lett.\/} {\bf 102} 020404

\bibitem{Wint2014}
Winter A 2014 {\em Journal of Physics A: Mathematical and Theoretical\/} {\bf
  47} 424031

\bibitem{Gree1990}
Greenberger D, Horne M, Shimony A and Zeilinger A 1990 {\em American Journal of
  Physics\/} {\bf 58} 1131--43

\end{thebibliography}
%\bibliography{bibliography}{}
\bibliographystyle{iopart-num}

\appendix
\section{}

\label{mainapp}
Just as we defined the support of a preparation procedure above, we will denote the \emph{support of a measurement effect} with response function $\xi(\lambda)$ as
\begin{equation}
\mathcal{R}(\xi) := \{\lambda\in\Lambda | \xi(\lambda)>0\}.
\end{equation}
We can also define a notion of the \emph{support of a transformation procedure} as
\begin{equation}
\mathcal{U}(\Gamma):=\{\{\lambda_1,\lambda_2\}\in\Lambda\otimes\Lambda |\Gamma(\lambda_2|\lambda_1)>0\}.
\end{equation}

\begin{thm}
Possibilistic measurement noncontextuality is equivalent to Hardy measurement noncontextuality.
\end{thm}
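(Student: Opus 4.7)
My plan is to mirror the argument that established the preparation version, \cref{thm:Hardy_poss_eq}. One direction, namely that possibilistic measurement noncontextuality implies Hardy measurement noncontextuality, is immediate because probabilistic equivalence of two outcomes implies possibilistic equivalence: any pair of outcomes falling under the hypothesis of the Hardy assumption also falls under the hypothesis of the possibilistic one, with the same ontological conclusion. The substantive task is the converse: showing that, under Hardy measurement noncontextuality, any two POVM effects $E_0$ and $E_1$ with identical kernels have response functions with identical supports.

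The key step is to construct POVMs that realise the effect $E_0$ in two different ways, one of which exposes $E_1$ as a coarse-graining partner. Let $\beta_{0,\min}$ be the smallest nonzero eigenvalue of $E_0$ and $\beta_{1,\max}$ the largest eigenvalue of $E_1$, and set $\alpha = \min(\beta_{0,\min}/\beta_{1,\max},1) > 0$. Using the operator inequalities $E_0 \geq \beta_{0,\min} P$ and $E_1 \leq \beta_{1,\max} P$, where $P$ projects onto the common support, I would verify that $E_0 - \alpha E_1 \geq 0$ and that
\begin{equation}
\mathcal{M}' := \{\alpha E_1,\; E_0 - \alpha E_1,\; \mathbb{1} - E_0\}
\end{equation}
is a valid POVM. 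I would also introduce a mixed measurement $\mathcal{M}''$ in which $\mathcal{M}_1 = \{E_1,\mathbb{1}-E_1\}$ is performed with probability $\alpha$ and the trivial POVM $\{\mathbb{1}\}$ otherwise; its ``$E_1$ from $\mathcal{M}_1$'' outcome has overall effect $\alpha E_1$ and response function $\alpha\, \xi_{\mathcal{M}_1}(E_1|\lambda)$.

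Two applications of Hardy measurement noncontextuality then finish the argument. First, the $\alpha E_1$ outcomes of $\mathcal{M}'$ and $\mathcal{M}''$ are probabilistically equivalent, so their response functions share a common support, which in the $\mathcal{M}''$ realisation is simply $\mathcal{R}(\xi_{\mathcal{M}_1}(E_1|\cdot))$. Second, coarse-graining the first two outcomes of $\mathcal{M}'$ yields an outcome whose effect equals $E_0$, so it is probabilistically equivalent to $E_0$ appearing in $\mathcal{M}_0 := \{E_0,\mathbb{1}-E_0\}$; since the support of a sum of non-negative response functions is the union of the individual supports, this gives
\begin{equation}
\mathcal{R}(\xi_{\mathcal{M}_0}(E_0|\cdot)) \;=\; \mathcal{R}(\xi_{\mathcal{M}_1}(E_1|\cdot)) \cup \mathcal{R}(\xi_{\mathcal{M}'}(E_0 - \alpha E_1|\cdot)) \;\supseteq\; \mathcal{R}(\xi_{\mathcal{M}_1}(E_1|\cdot)).
\end{equation}
Swapping the roles of $E_0$ and $E_1$ and repeating the construction delivers the reverse inclusion, so the two supports coincide.

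The main obstacle I anticipate is the positivity bookkeeping for $\mathcal{M}'$: unlike the preparation case, where only positivity of the auxiliary state $\sigma_0$ had to be checked (normalisation being automatic), here one must also verify that both $\alpha E_1$ and $E_0 - \alpha E_1$ lie below $\mathbb{1}$ so that they are genuine POVM elements. Handling degenerate cases, such as $E_0 = \mathbb{1}$ (where the ratio $\beta_{0,\min}/\beta_{1,\max}$ can exceed $1$ and forces the capping $\alpha = 1$), is the fiddliest part, but it is essentially bookkeeping and the structural analogy with the preparation proof gives confidence that the argument closes.
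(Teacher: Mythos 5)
Your proposal is correct and follows essentially the same route as the paper's appendix proof: decompose $E_0$ as a positive combination $\alpha E_1 + (E_0-\alpha E_1)$ with the remainder supported inside the common support, realise the coarse-grained outcome two ways, apply Hardy measurement noncontextuality to equate supports, and symmetrise to get the reverse inclusion. Your version is in fact more careful about the positivity and normalisation bookkeeping (the explicit $\alpha=\min(\beta_{0,\min}/\beta_{1,\max},1)$ and the auxiliary mixed measurement $\mathcal{M}''$) than the paper's rather terse sketch, but the underlying argument is the same.
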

\begin{proof}
Again, it is clear that possibilistic measurement noncontextuality implies Hardy measurement noncontextuality, so we only need to prove the converse.

Quantum mechanically, POVM elements are positive semidefinite matrices, and the assumption of possibilistic measurement noncontextuality is the identification of the ontic supports of any two such preparation procedures whose associated density matrices share the same kernel over $\mathbb{R}^2$. The proof here echoes the proof above strongly: if we have two matrices $E_1$ and $E_2$, that share a kernel, and correspond to two measurement effects, we can write $E_1=aE_2+b E_3$, for $E_3$ some matrix with a support strictly contained within that of the other $E_i$. In a similar fashion to the preparations case, we can construct a POVM element with the same matrix as $E_1$ as a convex sum of $E_2$ and $E_3$, demonstrating that $\mathcal{R}(E_1)=\mathcal{R}(E_2)\cup\mathcal{R}(E_3)$. We also have, by symmetry $\mathcal{R}(E_2)=\mathcal{R}(E_1)\cup\mathcal{R}(E_3)$, and therefore $\mathcal{R}(E_1)=\mathcal{R}(E_2)=\mathcal{R}(\mathbb{1}_{\supp(E_i)})$. This completes the proof.
\end{proof}
The above proofs admit a slight generalisation, in that they can show that the possibilistic and probabilistic operational relations lead to identical restrictions for any operational relation based on a function $f(x)$ with $f(x)=f(0)$ iff $x=0$, and $f(x+y)\geq f(x)$, as defined in section \ref{FG}.

We note that in this proof, we are explicitly using the same assumption that we called out in the proof of theorem \ref{thm:psi}, namely that we are considering a coarse-graining of two measurement outcomes via classical post-processing to be possibilistically identical to an ontological coarse graining of the relevant POVM elements. In that section, dropping this implicit assumption weakened our axioms and required we move to an additional noncontextuality assumption. However, for this result, dropping this assumption makes the proof trivial because we are proving a much weaker statement. It can be easily checked that this proof, therefore, holds in both the case in which this assumption is made or not.

\begin{thm}
Possibilistic transformation noncontextuality is equivalent to Hardy transformation noncontextuality.
\end{thm}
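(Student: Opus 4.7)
The plan is to mirror closely the structure of the proof of \cref{thm:Hardy_poss_eq} and its measurement analogue, with the Choi-Jamio\l{}kowski isomorphism playing the role that density matrices and POVM elements played there. I would first note that possibilistic transformation noncontextuality implies Hardy transformation noncontextuality directly via \cref{eq:ProbOp_implies_PossOp}, so only the converse requires attention.

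Let $\mathcal{T}_1, \mathcal{T}_2$ be CPTP maps that are possibilistically equivalent operationally. The first substantive step is to translate this into a statement about their Choi matrices $J_i = (\mathcal{I}\otimes\mathcal{T}_i)(\ket{\Phi^+}\!\bra{\Phi^+})$. Every operational joint probability takes the form $\tr[(\rho^T\otimes E_k)J_i]$ as $\rho$ ranges over preparations and $E_k$ over POVM effects, so the vanishing pattern of these probabilities is controlled by $\ker J_i$, since the operators $\rho^T\otimes E_k$ generate the PSD cone on the joint space. Possibilistic equivalence of $\mathcal{T}_1$ and $\mathcal{T}_2$ is therefore equivalent to $\ker J_1 = \ker J_2$, rendering the problem structurally identical to the preparations case.

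Following the construction in the proof of \cref{thm:Hardy_poss_eq}, I would then define
\begin{equation}
\tau_0 = \frac{1}{1-\alpha}\left(J_1 - \alpha J_2\right), \qquad \alpha = \frac{\lambda_{\min}^{>0}(J_1)}{\lambda_{\max}(J_2)},
\end{equation}
which is positive semidefinite with $\ker\tau_0 \supseteq \ker J_1 = \ker J_2$, and whose partial trace on the output system is still proportional to the identity because this property is preserved under the convex combination. Thus $\tau_0$ is the Choi matrix of a valid CPTP map $\mathcal{T}_3$, and $\mathcal{T}_1 = (1-\alpha)\mathcal{T}_3 + \alpha\mathcal{T}_2$ admits a physical implementation as a classically controlled mixture of $\mathcal{T}_3$ and $\mathcal{T}_2$. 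Such a mixture induces $\Gamma_{\mathcal{T}_1}(\lambda'|\lambda) = (1-\alpha)\Gamma_{\mathcal{T}_3}(\lambda'|\lambda) + \alpha\Gamma_{\mathcal{T}_2}(\lambda'|\lambda)$, so $\mathcal{U}(\Gamma_{\mathcal{T}_1}) = \mathcal{U}(\Gamma_{\mathcal{T}_2}) \cup \mathcal{U}(\Gamma_{\mathcal{T}_3}) \supseteq \mathcal{U}(\Gamma_{\mathcal{T}_2})$. Swapping the roles of $J_1$ and $J_2$ yields the reverse containment, giving the desired equality.

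The main obstacle I anticipate is the first step, where possibilistic equivalence of channels is identified with equality of Choi-matrix kernels; this requires a short duality argument rather than a direct check, and one must verify that restricting $\rho$ to states and $E_k$ to POVM effects still sweeps out enough of the dual cone to detect all nonzero matrix elements. A secondary subtlety, just as in the preparations case, is identifying the ontic support of the mixed channel with the union of the summands' supports: this relies on the convexity of the ontological model (already assumed in the paper) together with Hardy noncontextuality to equate the two operationally indistinguishable implementations of $\mathcal{T}_1$. Beyond these points the argument is a direct transcription of the preparation proof.
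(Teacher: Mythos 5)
Your overall strategy is sound and in one respect more explicit than the paper's: you carry out the convex-decomposition argument of \cref{thm:Hardy_poss_eq} directly on Choi matrices and work with the transformation supports $\mathcal{U}(\Gamma)$ throughout, whereas the paper merely observes that preparing half of a maximally entangled state and sending it through the channel turns the Choi state into a preparation, and then appeals to the already-proved preparation case. Your verification that $\tau_0$ is a valid Choi matrix (positivity on the common support via $\alpha=\lambda^{>0}_{\min}(J_1)/\lambda_{\max}(J_2)$, and preservation of $\tr_{\mathrm{out}}J\propto\mathbb{1}$ under the affine combination) and the union-of-supports step for the classically controlled mixture are both correct.

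However, the step you flag as the ``main obstacle'' is a genuine gap, and the ``short duality argument'' you hope for does not exist in the form you describe. If the operational probabilities are restricted to the product form $\tr[(\rho^T\otimes E_k)J_i]$ with $\rho$ a state of the input system alone and $E_k$ an effect on the output system alone, then possibilistic equivalence of the channels does \emph{not} imply $\ker J_1=\ker J_2$. Concretely, for a qubit take $J_1=\frac{1}{3}(\mathbb{1}-\ketbra{\Phi^+})$ and $J_2=\frac{1}{3}(\mathbb{1}-\ketbra{\Phi^-})$: both are valid Choi states (correct partial trace), both channels map \emph{every} input state to a full-rank output (e.g.\ $\mathcal{T}_1(\rho)=\frac{2}{3}\mathbb{1}-\frac{1}{3}\rho$), so the ancilla-free vanishing patterns agree identically, yet $\ker J_1\neq\ker J_2$. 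The point is that product operators $\rho^T\otimes E_k$ cannot detect a kernel spanned by entangled vectors. The fix is exactly the move the paper makes explicitly: the operational relation for transformations must be taken to include ancilla-assisted experiments (prepare an entangled state $\sigma_{AB}$, pass $B$ through the channel, perform a joint measurement $E_{AC}$); with $\sigma_{AB}=\ketbra{\Phi^+}$ and arbitrary joint effects, the vanishing pattern directly probes $\bra{v}J_i\ket{v}$ for every $\ket{v}$ and hence pins down $\ker J_i$. With that amendment your argument goes through; without it, the first step of your reduction to the preparation-style decomposition fails.
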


\begin{proof}
Once again, it is clear that possibilistic transformation noncontextuality implies Hardy transformation noncontextuality, so we only need to prove the converse.

Consider two transformation procedures $\Gamma_1$ and $\Gamma_2$ with the property that $\forall \xi,\mu$, 
\begin{equation}\int_\Lambda \dif \lambda \dif\lambda' \mu(\lambda)\Gamma_1(\lambda'|\lambda) \xi(\lambda')>0 \Leftrightarrow \int_\Lambda \dif \lambda \dif\lambda' \mu(\lambda)\Gamma_2(\lambda'|\lambda) \xi(\lambda')>0.
\end{equation}
 In general, these are associated with some representations of completely positive, trace preserving maps $T_1$ and $T_2$. Consider their action on the state $\ket{\psi}\bra{\psi}$: they map it to $\sigma_1=T_1\ket{\psi}\bra{\psi}T_1^\dagger$, and $\sigma_2=T_2\ket{\psi}\bra{\psi}T_2^\dagger$ respectively. These two matrices must share a kernel, or else we would be in contradiction with our assumed property.  An example of two maps meeting these criteria might be two dephasing channels with different dephasing strengths.

Note that the most general kind of measurement we can perform to enact tomography on such a transformation procedure is to prepare some entangled state, send part of the entangled state through the transformation procedure, and then follow this up by a joint measurement. By the possibilistic notion of noncontextuality, outcomes of such experiments must yield either a zero probability for both transformations, or a nonzero probability for both transformations. Consider now the Choi-Jamiolkowski isomorphism as applied to our two CPTP maps $T_i$, leading to two channel-states $\tau^{(i)}$. We can see that in general, we require for an entangled initial state $\sigma_{AB}$ and an entangled measurement $E_{AC}$, that
\begin{equation}
\tr\left(\sigma_{AB}\tau^{(1)}_{BC}E_{AC}\right)>0 \quad \Longleftrightarrow \quad  \tr\left(\sigma_{AB}\tau^{(2)}_{BC}E_{AC}\right)>0.
\end{equation}
Taking a trace over $C$ for appropriately chosen $E_{AC}$, \emph{viz} a Bell state $\ketbra{\Phi_+}$ reduces this to the already-proved case of preparation noncontextuality.
\end{proof}

\end{document}